\documentclass[conference]{IEEEtran}
\IEEEoverridecommandlockouts
\usepackage{cite}
\usepackage{amsmath,amssymb,amsfonts}
\usepackage{algorithm,algorithmic}
\usepackage{graphicx}
\usepackage{mathtools}
\DeclarePairedDelimiter\ceil{\lceil}{\rceil}
\DeclarePairedDelimiter\floor{\lfloor}{\rfloor}
\usepackage{graphicx}
\usepackage{amsthm}
\newtheorem{theorem}{Theorem}
\newtheorem{lemma}[theorem]{Lemma}

\usepackage{textcomp}
\usepackage{xcolor}
\def\BibTeX{{\rm B\kern-.05em{\sc i\kern-.025em b}\kern-.08em
    T\kern-.1667em\lower.7ex\hbox{E}\kern-.125emX}}
\begin{document}

\title{An Upper Bound for Sorting $R_n$ with LRE\\}

\author{\IEEEauthorblockN{Sai Satwik Kuppili\IEEEauthorrefmark{1}
and Bhadrachalam Chitturi\IEEEauthorrefmark{2}}

\IEEEauthorblockA{\IEEEauthorrefmark{1} Dept. of Computer Science and Engineering,
Amrita University\\
Amritapuri, India\\
\IEEEauthorblockA{\IEEEauthorrefmark{2}Dept. of Computer Science,
University of Texas at Dallas\\
Texas, USA\\}
\IEEEauthorblockA{\IEEEauthorrefmark{1}VMware Inc.
Pune, India\\}
Email: \IEEEauthorrefmark{1}satwik.kuppili@gmail.com,
\IEEEauthorrefmark{2}chalam@utdallas.edu}}
\maketitle
\maketitle
\footnote{This article is submitted to 10th International Advanced Computing Conference, IACC-2020.}
\begin{abstract}
A permutation $\pi$ over alphabet $\Sigma = {1,2,3,\ldots,n}$, is a sequence  where every element $x$ in $\Sigma$ occurs exactly once. $S_n$ is the symmetric group consisting of all permutations of length $n$ defined over $\Sigma$. 
 $I_n$ = $(1, 2, 3,\ldots, n)$ and  $R_n =(n, n-1, n-2,\ldots, 2, 1)$ are identity (i.e. sorted) and reverse permutations respectively. 
 An operation, that we call as an $LRE$ operation, has been defined in OEIS with identity A186752.
 This operation is constituted by three generators: left-rotation, right-rotation and transposition(1,2). We call transposition(1,2) that swaps the two leftmost elements as $Exchange$.
 The minimum number of moves required to transform $R_n$ into $I_n$ with $LRE$ operation are known for $n \leq 11$ as listed in OEIS  with sequence number A186752. 
 For this problem no upper bound is known. OEIS sequence A186783 gives the conjectured diameter of the symmetric group $S_n$ when generated by $LRE$ operations \cite{oeis}.
 The contributions of this article are: (a) The first non-trivial upper bound for the number of moves required to sort $R_n$ with $LRE$; (b) a tighter upper bound for the number of moves required to sort $R_n$ with $LRE$; and (c) the minimum number of moves required to sort $R_{10}$ and $R_{11}$ have been computed. Here we are computing an upper bound of the diameter of Cayley graph generated by $LRE$ operation. Cayley graphs are employed in computer interconnection networks to model efficient parallel architectures. The diameter of the network corresponds to the maximum delay in the network.
\end{abstract}

\begin{IEEEkeywords}
Permutation, Sorting, Left Rotate, Right Rotate, Exchange, Symmetric Group,  Upper Bound, Cayley Graphs.
\end{IEEEkeywords}

\section{Introduction}
The following problem is from OEIS  with sequence number A186752: ``Length of minimum representation of the permutation $[n,n-1,\ldots,1]$ as the product of transpositions $(1,2)$ and left and right rotations $(1,2,\ldots,n)$. \cite{oeis}." We call this operation as $LRE$. 
$LRE$ operation consists of following three generators: (i) $Left Rotate$ that cyclically shifts all elements to left by one position, (ii) $Right Rotate$ that cyclically shifts all elements to right by one position and (iii) $Exchange$ that swaps the leftmost two elements of the permutation. The mentioned operations are abbreviated as $L$, $R$ and $E$ respectively.
$R_n$ denotes $(n,n-1,....,2,1)$ whereas $I_n$ denotes the sorted order or identity permutation: $(1,2,\ldots, n)$. Sorting a permutation $\pi$ in this article refers to transforming $\pi$ into $I_n$ with $LRE$ operation. The alphabet is $\Sigma= (1, 2, 3, \ldots, n)$. 
 \cite{KCN19, KC20} studied a more restricted version of this problem, i.e. $LE$ operation where the operation $R$ is disallowed and appears in OEIS with sequence number A048200 \cite{oeis}. We note that the results of  \cite{KCN19, KC20} are applicable to $RE$ operation (that has not been studied) due to symmetry.
 We seek to obtain an upper bound on the length of generator sequence that transforms $R_n$ with $LRE$ into $I_n$. \\
The optimum number of moves to sort $R_n$ with $LRE$ are known only for $n\leq 11$ ($n=10$ and $n=11$ are our contributions). We give the first non-trivial upper bound to sort $R_n$ with $LRE$.

Let $\pi[1 \cdots n]$ be the array containing the input permutation. The element at an index $i$ is denoted by $\pi[i]$. Initially for all $i$, $\pi[i]= R_n[i]$. 
We define a permutation $K_{r,n} \in S_n$ as follows. The elements $n-(r-1),n-(r-2),\ldots n$ are in sorted order i.e. the largest $r$ elements of $\Sigma$ are in sorted order. $K_{r,n}$ is obtained by concatenating sublists $(n-(r-1),n-(r-2),\ldots n)$ and $(n-r,n-(r+1), \ldots 3,2,1)$. Therefore a permutation $K_{r,n}$ can be denoted as follows $(n-(r-1), n-(r-2), \ldots n,n-r,n-(r+1), \ldots 3,2,1)$. \\
Therefore, $K_{1,n}$ is $(n, n-1, \ldots 3,2,1)$ which is $R_n$ and $K_{n,n}$ $(1,2,\ldots, n-1, n)$ which is $I_n$. 
Let $LE$ denote execution of Left-Rotate move followed by a Exchange move and $RE$ denote execution of Right-Rotate move followed by a Exchange move. Further, let $(LE)^p$ and $(RE)^p$ be $p$ consecutive executions of $RE$ and $RE$ respectively. Similarly, let $L^p$ and $R^p$ be $p$ consecutive executions of $L$ and $R$ respectively. 
\subsection{Background}
A Cayley graph $\Gamma$ defined on Symmetric group $S_n$, corresponding to an operation $\Psi$ with a generator set $G$ has $n!$ vertices each vertex corresponding to a unique permutation. 
An edge in $\Gamma$ from a vertex $u$ to another vertex $v$ indicates that there exits a generator $g \in G$ such that when $g$ is applied to $u$ one obtains $v$. Applying a generator is called as making a \emph{move}. 
An upper bound of $x$ moves to sort any permutation in $S_n$ indicates that the diameter of $\Gamma$ is at most $x$. An exact upper bound equals the diameter of $\Gamma$. Cayley graphs have many properties that render them apt for computer interconnection networks \cite{A:K:1989,LV:etal:1993}. Various operations to sort permutations have been posed that are of theoretical and practical interest \cite{LV:etal:1993}. 

Jerrum showed that when the number of generators is greater than one, the computation of minimum length of sequence of generators to sort a permutation is intractable \cite{Je85}. $LRE$ operation has three generators and the complexity of transforming one permutation in to another with $LRE$ unknown. Exchange move is a reversal of length two, in fact it is a prefix reversal of length two. 

For sorting permutations with (unrestricted) prefix reversals the operation that has $n-1$ generators, the best known upper bound is $18n/11+O(1)$ \cite{Ch09}. In $LRE$ operation, both left and right rotate cyclically shifts the entire permutation. In contrast, \cite{Fe10} an extended bubblesort is considered, where an additional swap is allowed between elements in positions 1 and $n$. 
We call an operation say $\Psi$ symmetric if for any generator of $\Psi$ its $inverse$ is also in $\Psi$. Exchange operation is inverse of itself whereas left and right rotate are inverses of one another, thus, $LRE$ is symmetric. Both $LE$ and $LRE$ are restrictive compared to the other operations that are studied in the context of genetics e.g. \cite{C11}. 
Research in the area of Cayley graphs has been active. Cayley graphs are studied pertaining to their efficacy in modelling a computer interconnection network, their properties in terms of diameter, presence of greedy cycles in them etc.  \cite{M17,E18,G18}. Efficient computation of all distances, some  theoretical properties of specific Cayley graphs, and efficient counting of groups of permutations in $S_n$ with related properties have been recently studied \cite{cdas2018,T18, bj2019,Ch20}.

\section{Algorithm LRE}
Algorithm $LRE$ sorts $R_n$ in stages. It first transforms $R_n$ which is identical to $K_{1,n}$ into $K_{2,n}$ by executing an $E$ move. Subsequently, $K_{i+1,n}$ is obtained from $K_{i,n}$ by executing the moves specified by Lemma \ref{l1}. Thus, eventually we obtain $K_{n,n}$ which is identical to $I_n$. Pseudo Code for the Algorithm $LRE$ is shown below.\\\\
\noindent\textbf{Algorithm LRE}\\
{Input: $R_n$. Output: $I_n$.\\
Initialization:$\forall{i}$ $\pi[i]=R_n[i]$.\\ All moves are executed on $\pi$.
\begin{algorithm}
\begin{algorithmic}[1]
\caption{Algorithm LRE}
\FOR{$r\in(1,\ldots,n-2)$}
\IF{$r=(n-2)$}
\STATE Execute $R^2$
\STATE Execute E move
\ELSE
\STATE Execute $(L)^{r-1}$
\STATE Execute E move
\STATE Execute $(RE)^{r-1}$
\ENDIF
\ENDFOR
\end{algorithmic}
\end{algorithm}{}

}
\subsection{Analysis}
\begin{lemma}
\label{l1}
The number of moves required to obtain $K_{r+1,n}$ from $K_{r,n} \forall r \in (1,\ldots, n-3)$ is $3r-2$.
\end{lemma}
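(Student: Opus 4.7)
The plan is to verify two things: that the prescribed sequence of moves totals $3r-2$, and that it transforms $K_{r,n}$ into $K_{r+1,n}$. The count is immediate, since $L^{r-1}$ contributes $r-1$ moves, the lone $E$ contributes $1$, and $(RE)^{r-1}$ contributes $2(r-1)$, summing to $3r-2$. For correctness I would track the permutation through the three phases. Starting from $K_{r,n}=(n-r+1,\ldots,n,n-r,\ldots,1)$, the $r-1$ left rotations cyclically shift the prefix $(n-r+1,\ldots,n-1)$ to the end, leaving $n$ in position $1$, the reverse block $(n-r,n-r-1,\ldots,1)$ in positions $2$ through $n-r+1$, and the migrated prefix in positions $n-r+2$ through $n$. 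The $E$ move then swaps the first two entries, yielding a permutation with leader $n-r$ immediately followed by $n$ and ending with $n-1$ in position $n$.

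The heart of the argument is an induction on $j \in \{0,1,\ldots,r-1\}$ showing that after $(RE)^j$ applied to the post-$E$ permutation, the array decomposes into four consecutive segments: the fixed leader $n-r$; a sorted block $(n-j,n-j+1,\ldots,n)$ of length $j+1$; the untouched reverse block $(n-r-1,n-r-2,\ldots,1)$; and a shrinking tail $(n-r+1,\ldots,n-1-j)$ of length $r-1-j$. The base case $j=0$ is exactly the output of the $E$ move described above. For the inductive step, the $R$ drags the current last entry $n-1-j$ to the front, and the following $E$ restores $n-r$ to position $1$ while placing $n-1-j$ immediately after it, thereby extending the sorted block leftward by precisely one element and shrinking the tail by one. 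Setting $j=r-1$ causes the tail to disappear and the sorted block to grow to length $r$, so the resulting permutation is $(n-r,n-r+1,\ldots,n,n-r-1,\ldots,1)=K_{r+1,n}$.

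The main obstacle is purely bookkeeping — verifying that each $R$ picks off exactly the element that continues the sorted block on the left, and that the reverse block $(n-r-1,\ldots,1)$ is never perturbed during the $(RE)^{r-1}$ phase. Stating the four-segment invariant cleanly up front makes the inductive step almost mechanical. Sanity checks on $r=1$ (where $L^{r-1}$ and $(RE)^{r-1}$ are both empty, the algorithm reduces to a single $E$, and the claimed count $3(1)-2=1$ matches the transition $K_{1,n}\to K_{2,n}$) and on $r=2$ (where the single $RE$ inserts $n-1$ between $n-r$ and $n$) help guard against off-by-one errors in the lengths of the sorted block and the tail.
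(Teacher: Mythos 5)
Your proposal is correct and follows essentially the same route as the paper: the identical move sequence $L^{r-1}$, $E$, $(RE)^{r-1}$ with the same count $(r-1)+1+2(r-1)=3r-2$, traced phase by phase. The only difference is that you make the effect of $(RE)^{r-1}$ rigorous via an explicit four-segment invariant and induction on $j$, where the paper simply states the resulting permutation; this is a welcome refinement rather than a different argument.
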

\begin{proof} According to the definition, $K_{r,n}$ is \\
$(n-(r-1), n-(r-2), \ldots n-1, n, n-r, n-(r+1), n-(r+2), \ldots 3,2,1)$. \\
Executing $L^{r-1}$ on $K_{r,n}$ yields \\
$(n, n-r, n-(r+1), n-(r+2), \ldots 3, 2, 1, n-(r-1), n-(r-2), \ldots n-1)$.\\
An E move is executed to obtain \\
$(n-r, n , n-(r+1), n-(r+2), \ldots 3, 2, 1, n-(r-1), n-(r-2), \ldots n-1)$.\\
Finally, $(RE)^{r-1}$ is executed to obtain \\
$(n-r, n-(r-1), n-(r-2), \ldots n-1, n , n-(r+1), n-(r+2), \ldots 3, 2, 1)$ which is $K_{r+1,n}$. \\
Therefore, the total number of moves required to obtain $K_{r+1,n}$ from $K_{r,n}$ is $(r-1) + 1 + 2(r-1)=3r-2$.
\end{proof}
\begin{lemma}
\label{l2}
The number of moves required to obtain $K_{n,n}$ from $K_{n-2,n}$ is 3.
\end{lemma}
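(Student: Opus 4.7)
The plan is to unfold the definition of $K_{n-2,n}$ explicitly, then simply apply the three-move sequence $R^2$ followed by $E$ prescribed by the algorithm at $r = n-2$, and verify that the resulting permutation equals $I_n = K_{n,n}$. There is no combinatorial obstruction here; the statement is essentially a direct computation, so the proof is really a trace of the algorithm on this final stage.

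First I would specialize the definition: substituting $r = n-2$ into the formula for $K_{r,n}$, the prefix $(n-(r-1),\ldots,n)$ becomes $(3,4,\ldots,n)$ and the suffix $(n-r,n-(r+1),\ldots,1)$ becomes $(2,1)$, so $K_{n-2,n} = (3,4,\ldots,n,2,1)$. Then I would apply the moves in order. An $R$ move cyclically shifts the rightmost element to the front, giving $(1,3,4,\ldots,n,2)$; a second $R$ gives $(2,1,3,4,\ldots,n)$; finally the $E$ move swaps the first two entries to produce $(1,2,3,\ldots,n)$, which is $I_n = K_{n,n}$. Counting the generators used, this is $2 + 1 = 3$ moves, establishing the claim.

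The only step that requires any care is confirming that this transformation genuinely skips $K_{n-1,n}$ and lands on $K_{n,n}$ directly (the algorithm does not go through an intermediate $K_{r}$ at the very last stage), so it is worth remarking explicitly that the move sequence is well-defined for $r = n-2$ and that the resulting permutation is $I_n$ rather than some $K_{r,n}$ with $r < n$. Since this is a plain symbolic check rather than an inductive or combinatorial argument, I expect no real obstacle; the main thing to get right is indexing the cyclic shifts correctly.
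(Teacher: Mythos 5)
Your proposal is correct and follows essentially the same argument as the paper: instantiate $K_{n-2,n}=(3,4,\ldots,n,2,1)$, apply $R^2$ to obtain $(2,1,3,\ldots,n)$, then apply $E$ to reach $I_n$, for a total of $3$ moves. The extra remark about skipping $K_{n-1,n}$ is harmless but not needed, since the lemma only asserts the cost of reaching $K_{n,n}$ from $K_{n-2,n}$.
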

\begin{proof} According to the definition, $K_{n-2,n}$ is \\
$(3, 4, \ldots, n-1, n, 2, 1)$. Executing $R^2$ on $K_{n-2,n}$ yields \\
$(2, 1, 3, \ldots , n-1, n)$. Then executing an $E$ move yields
$(1, 2, 3\ldots , n-1, n)$ which is $K_{n,n}$. Therefore, three moves suffice to transform $K_{n-2,n}$ into $K_{n,n}$.
\end{proof}
\begin{theorem}
An upper bound for the number of moves required to sort $R_n$ with $LRE$ is $\frac{3}{2} n^2$.
\end{theorem}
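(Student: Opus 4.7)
The plan is to run Algorithm LRE on $R_n$ and simply add up the per-stage move counts supplied by Lemmas~\ref{l1} and~\ref{l2}, then check the resulting quantity against $\tfrac{3}{2}n^2$. First I would note that the algorithm traverses the sequence $K_{1,n}, K_{2,n}, \ldots, K_{n-2,n}, K_{n,n}$, skipping $K_{n-1,n}$ via the special last iteration. Since $K_{1,n} = R_n$ and $K_{n,n} = I_n$, the total cost is the sum of the costs of these transitions.

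Next I would invoke the two lemmas. For each $r \in \{1,2,\ldots,n-3\}$, Lemma~\ref{l1} gives a cost of $3r-2$ moves for $K_{r,n} \to K_{r+1,n}$, while Lemma~\ref{l2} bounds the final transition $K_{n-2,n} \to K_{n,n}$ by $3$ moves. Adding these contributions gives a total move count of
\[
T(n) \;=\; 3 \;+\; \sum_{r=1}^{n-3}(3r-2).
\]

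I would then evaluate the arithmetic progression in closed form, obtaining a quadratic in $n$ whose leading term is $\tfrac{3}{2}n^2$ and whose lower-order terms are negative for all sufficiently large $n$. Comparing this closed form to $\tfrac{3}{2}n^2$ reduces to a simple linear inequality that holds already for small $n$ (with the degenerate cases $n \le 2$ handled directly, since $R_1$ and $R_2$ need at most one move).

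There is no real technical obstacle in this plan; the argument is a straightforward bookkeeping exercise built entirely on the two preceding lemmas. The only thing requiring mild care is making sure the index range in the summation is correct (summing up to $r = n-3$, not $n-2$, because the last stage is handled separately by Lemma~\ref{l2}) and verifying the resulting inequality on the small-$n$ boundary so that the stated bound $\tfrac{3}{2}n^2$ holds uniformly.
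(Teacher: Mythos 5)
Your proposal matches the paper's proof essentially step for step: sum the $3r-2$ costs from Lemma~\ref{l1} over $r=1,\ldots,n-3$, add the $3$ moves from Lemma~\ref{l2}, obtain the closed form $\tfrac{3}{2}n^2-\tfrac{19}{2}n+18$, and observe it is dominated by $\tfrac{3}{2}n^2$. The only difference is your explicit check of the small-$n$ boundary, which the paper glosses over by ``ignoring lower order terms''; this is a harmless refinement, not a different approach.
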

\begin{proof}
Let $J(n)$ be the number of moves required to sort $R_{n}$ with $LRE$. 
According to Lemma \ref{l1}, the number of moves required to obtain $K_{r+1,n}$ from  $K_{r,n}$ is $3r-2$. Let $A(n)$ be the number of moves required to obtain $K_{n-2,n}$ from $K_{1,n}$ (which is $R_n$). Then
\begin{equation*}
\begin{split}
A(n) &= \sum\limits_{r=1}^{n-3} (3r-2) \\
&= 3\sum\limits_{r=1}^{n-3} r - (2(n-3))\\
&= \frac{3}{2} (n-2)(n-3) - 2n + 6 \\
&= \frac{3}{2} (n^2 - 5n +6) -2n + 6\\
&= \frac{3}{2} n^2 -\frac{15}{2} n + 9 -2n + 6\\
&= \frac{3}{2} n^2 - \frac{19}{2} n + 15\\
\end{split}
\end{equation*}
According to Lemma \ref{l2}, the number of moves required to obtain $K_{n,n}$ from $K_{n-2,n}$ is 3. Therefore,
\begin{equation*}
\begin{split}
J(n) &=A(n)+3\\
&= \frac{3}{2} n^2 - \frac{19}{2} n + 18
\end{split}
\end{equation*}
Therefore, the total number of moves required to sort $R_n$ with $LRE$ is $\frac{3}{2} n^2 - \frac{19}{2} n + 18$. Ignoring the lower order terms an upper bound for number of moves required to sort $R_n$ with $LRE$ is $\frac{3n^2}{2}$. This is the first non-trivial upper bound for the number of moves required to sort $R_n$ with $LRE$.
\end{proof}

\section{Algorithm LRE1}
\noindent\\
We designed Algorithm $LRE1$ in order to obtain the tighter upper bound for sorting $R_n$ with $LRE$. We define a permutation $K_{r,n}^{'} \in S_n$ as follows. The largest $r$ elements of $\Sigma$ i.e. $n-(r-1),n-(r-2),\ldots n$ are in sorted order. $K_{r,n}^{'}$ is obtained by concatenating sublists $(n-r,n-(r+1), \ldots 3,2,1)$ and $(n-(r-1),n-(r-2),\ldots n)$. $K_{r,n}$ and $K_{r,n}^{'}$ differ by the starting position of sublist $(n-(r-1),n-(r-2),\ldots n)$. The starting position of $(n-(r-1),n-(r-2),\ldots n)$ in $K_{r,n}$ is 1 whereas in $K_{r,n}^{'}$ it is $n-r+1$. Algorithm $LRE1$ first transforms $R_n$ into $K_{\floor{\frac{n}{2}},n}^{'}$. Then it transforms $K_{\floor{\frac{n}{2}},n}^{'}$ into $K_{n,n}^{'}$ which is $I_{n}$. Let $k=\floor{\frac{n}{2}}$ and $k'=n-k$. Let $J'(n)$ be the number of moves executed by Algorithm $LRE1$ to sort $R_{n}$.\\ \\
{Input: $R_n$. Output: $I_n$.\\
Initialization:$\forall{i}$ $\pi[i]=R_n[i]$. $k=\floor{\frac{n}{2}}$, $k'=n-k=\ceil{\frac{n}{2}}$\\ All moves are executed on $\pi$.
\begin{algorithm}
\begin{algorithmic}[1]
\caption{Algorithm LRE1}
\STATE \textbf{D1:}
\IF{$k \not= 1$}
    \STATE Execute E move
\ENDIF
\IF{$k \geq 3$}
    \STATE \textbf{D2:}
    \STATE Execute $(LE)^{k-2}$
    \STATE Execute $(RE)^{k-2}$
    \STATE Execute L move
    \STATE \textbf{D3:}
    \IF{$k \geq 4$}
        \FOR{$i\in(0,\ldots,k-1)$}
            \IF{$\pi[1]=(n-1)$}
                \STATE Execute E move
                \IF{$i\geq$1}
                    \STATE Execute $(RE)^{i-1}$
                \ENDIF
            \ENDIF
            \STATE Execute $(L)^{i}$
        \ENDFOR
    \ENDIF
\ENDIF
\STATE \textbf{D4:}
\IF{$k \not= 1$}
    \STATE Execute $(L)^{2}$
\ELSE
    \STATE Execute L move
\ENDIF
\STATE \textbf{D5:}
\STATE Execute E move
\IF{$k' \geq 3$}
    \STATE \textbf{D6:}
    \STATE Execute $(LE)^{k'-2}$
    \STATE Execute $(RE)^{k'-2}$
    \IF{$k' \geq 4$}
        \STATE \textbf{D7:}
        \STATE Execute L move
        \STATE \textbf{D8:}
        \FOR{$i\in(0,\ldots,k'-1)$}
            \IF{$\pi[1]=(k'-1)$}
                \STATE Execute E move
                 \IF{$i\geq$1}
                \STATE Execute $(RE)^{i-1}$
                \ENDIF
            \ENDIF
            \IF{$i \not= (k'-3)$}
                \STATE Execute $(L)^{i}$
            \ENDIF
        \ENDFOR
        \STATE \textbf{D9:}
        \STATE Execute R move
    \ENDIF
\ENDIF
\end{algorithmic}
\end{algorithm}
}
\subsection{Analysis}
\begin{lemma}
\label{l3}
The permutation obtained after executing D1 and D2 of Algorithm LRE1  is \\ $(n-1,n-2,\ldots,\ceil{\frac{n}{2}}+2, n, \ceil{\frac{n}{2}}, \ceil{\frac{n}{2}}-1, \ldots 3,2,1,\ceil{\frac{n}{2}}+1)$ and the number of moves executed is $2n-6$ when n is even and $2n-8$ when n is odd $\forall n\geq6$.
\end{lemma}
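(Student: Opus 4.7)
The plan is to prove both assertions by direct simulation of D1 and D2, maintaining closed-form descriptions of the permutation after each sub-phase, and then counting moves at the end.

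The workhorses are two compositional identities that follow straight from the definitions of $L$, $R$, and $E$: for any array $(a_1, a_2, \ldots, a_n)$, one $LE$ yields $(a_3, a_2, a_4, a_5, \ldots, a_n, a_1)$, while one $RE$ yields $(a_1, a_n, a_2, a_3, \ldots, a_{n-1})$. I would state and verify these once and then use them as rewrite rules throughout.

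Starting from $R_n$, the D1 $E$-move produces $(n-1, n, n-2, n-3, \ldots, 2, 1)$. I then prove by induction on $j$, for $0 \le j \le k-2$, that after $j$ iterations of $LE$ the permutation equals
\[
(n-1-j,\; n,\; n-2-j, n-3-j, \ldots, 2, 1,\; n-1, n-2, \ldots, n-j);
\]
the inductive step is a one-line application of the $LE$ identity, observing that the element in position $3$ is always $n-2-j$ and that the element in position $1$, namely $n-1-j$, cycles to the tail. Setting $j = k-2$ gives the state entering the $RE$ loop. A parallel induction shows that $RE$ keeps position $1$ pinned to $n-k+1$ while the tail elements $n-k+2, n-k+3, \ldots, n-1$ are pulled one at a time into position $2$ (each pushing the previously inserted elements one slot to the right), so that after $(RE)^{k-2}$ the permutation is
\[
(n-k+1,\; n-1, n-2, \ldots, n-k+2,\; n,\; n-k, n-k-1, \ldots, 2, 1).
\]
The final $L$ of D2 cycles $n-k+1$ to the rear, and because $n - k = n - \floor{n/2} = \ceil{n/2}$ for every $n$, this produces exactly the permutation stated in the lemma.

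For the move count, D1 contributes $1$, and D2 contributes $2(k-2)$ from $(LE)^{k-2}$, $2(k-2)$ from $(RE)^{k-2}$, and $1$ from the trailing $L$, for a total of $4k - 6$. Substituting $k = n/2$ when $n$ is even yields $2n - 6$, and $k = (n-1)/2$ when $n$ is odd yields $2n - 8$, matching the claim. The main obstacle is identifying the correct closed-form invariant for the $(LE)^{k-2}$ phase; once that is in hand, the remaining work is mechanical induction, and the hypothesis $n \ge 6$ is used only to guarantee $k \ge 3$ so that both loops are non-empty and the ellipses in the intermediate forms denote at least one element.
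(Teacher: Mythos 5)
Your proposal is correct and follows essentially the same route as the paper: trace the $E$ move of D1, then $(LE)^{k-2}$, then $(RE)^{k-2}$, then the final $L$, identifying the same intermediate permutations, and count $1+4(k-2)+1=4k-6$ moves with $k=\floor{n/2}$. The only difference is that you make the intermediate states rigorous via explicit $LE$/$RE$ rewrite identities and induction, whereas the paper simply asserts them.
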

\begin{proof}
Execution of E move on $R_n$ in $D1$ yields \\
$(n-1,n,n-2,\ldots, 3,2,1)$.\\
Then executing $(LE)^{k-2}$ in  $D2$ yields \\ 
$(\ceil{\frac{n}{2}}+1, n, \ceil{\frac{n}{2}}, \ceil{\frac{n}{2}}-1, \ldots 3,2,1,n-1,n-2,\ldots,\ceil{\frac{n}{2}}+2)$.\\
Then executing $(RE)^{k-2}$ in $D2$ yields\\ $(\ceil{\frac{n}{2}}+1,n-1,n-2,\ldots,\ceil{\frac{n}{2}}+2, n, \ceil{\frac{n}{2}}, \ceil{\frac{n}{2}}-1, \ldots 3,2,1)$.\\
Then performing $L$ in  $D2$ move yields \\ $(n-1,n-2,\ldots,\ceil{\frac{n}{2}}+2, n, \ceil{\frac{n}{2}}, \ceil{\frac{n}{2}}-1, \ldots 3,2,1,\ceil{\frac{n}{2}}+1)$. \\
Therefore, the total number of moves executed in step $D1$ and $D2$ is \\ $1+4*(\floor{\frac{n}{2}}-2)+1=4\floor{\frac{n}{2}}-6=$
$\begin{cases}
2n-6 & \text{if $n$ is even}\\
2n-8 & \text{if $n$ is odd}
\end{cases}.\\
$
\end{proof}
\begin{lemma}
\label{l4}
The permutation obtained after D3 and D4 of LRE1 algorithm are executed is $K_{\floor{\frac{n}{2}},n}^{'}$ and the number of moves executed in the above two steps is $\frac{3n^2-34n+112}{8}$ when n is even and $\frac{3n^2-40n+149}{8}$ when n is odd $\forall n \geq 8$. 
\end{lemma}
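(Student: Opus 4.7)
The plan is to build on Lemma~\ref{l3} and analyze D3 and D4 in turn, separately establishing correctness and the move count.

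For correctness, I will start from the state produced by D1 and D2, namely
$(n-1, n-2, \ldots, \ceil{\frac{n}{2}}+2, n, \ceil{\frac{n}{2}}, \ceil{\frac{n}{2}}-1, \ldots, 3, 2, 1, \ceil{\frac{n}{2}}+1)$,
and trace the FOR loop of D3 iteration by iteration. The guiding intuition is that D3 incrementally transfers the scrambled large block at the front --- the elements $n-1, n-2, \ldots, k'+2$ together with the misplaced $n$ --- into the correctly sorted tail $k'+1, k'+2, \ldots, n$ required by $K_{k,n}^{'}$, while preserving, modulo a cyclic rotation, the descending run $k', k'-1, \ldots, 2, 1$. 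I will formalize this as a loop invariant of the form: immediately before iteration $i$, the $i$ largest elements already occupy their destined positions at the tail; $\pi[1]$ is either $n-1$, in which case the IF-body fires this round, or an element of the descending run; and the remaining entries form a specific cyclic rotation of the unplaced large block followed by the rest of the descending run. Each firing of the IF-body performs $E$ followed by $(RE)^{i-1}$, which together splice $n-1$ into its proper slot among the already-sorted tail; the subsequent $L^i$ reorients the sequence so that the next large element is at position $1$. After the loop finishes, the permutation is a cyclic rotation of $K_{k,n}^{'}$ by two positions, which D4's $L^2$ then undoes.

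For the move count, I will classify iterations by whether the IF-condition is triggered. A triggered iteration contributes $1 + 2(i-1) + i = 3i - 1$ moves when $i \geq 1$, and $1$ move when $i = 0$; an untriggered iteration contributes only $i$ moves from $L^i$. Using the invariant to enumerate the triggering iterations, I will sum these contributions over $i \in \{0, 1, \ldots, k-1\}$, obtaining a quadratic in $k$. Substituting $k = \floor{\frac{n}{2}}$ and simplifying separately for the two parities yields the closed forms $\frac{3n^2 - 34n + 112}{8}$ for even $n$ and $\frac{3n^2 - 40n + 149}{8}$ for odd $n$, after which the two moves from D4 are added.

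The main obstacle is pinning down exactly which iterations trigger the IF-body, because this depends on the interaction between the lone out-of-place element $n$ and the descending run under the successive $L^i$ shifts, and it is parity-sensitive: the sizes of the large and small blocks coincide when $n$ is even but differ by one when $n$ is odd. Once the trigger pattern is fixed the arithmetic is routine, but formalizing the invariant so that it correctly tracks the position of $n$ throughout D3 --- and isolating the two-position cyclic offset that D4 must cancel --- is the delicate combinatorial step, which is also what forces the closed forms to split by parity.
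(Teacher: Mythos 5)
Your overall route---start from the Lemma~\ref{l3} state, trace the D3 loop, classify iterations by whether the IF-body fires, charge $3i-1$ moves to a firing iteration with $i\geq 1$ (one move when $i=0$), and finally add the two moves of D4's $L^2$---is exactly the paper's approach; the paper simply carries out this trace explicitly. The problem is that your proposal defers precisely the step that \emph{is} the paper's proof, namely the enumeration of the firing iterations, and the intuitions you offer for it are wrong in ways that would derail the computation. The pattern is governed by $n-1$, not by the element $n$: after the moves of each iteration, $n-1$ is back at position $1$ in every iteration except $i=1$, so the IF-body fires at $i=0$ ($E$ only, one move), is skipped at $i=1$ ($L$ only, one move), and fires at every subsequent iteration, giving the count $2+\sum_{i=2}^{k-3}(3i-1)$ for D3 plus $2$ for D4. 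In particular the pattern is \emph{not} parity-sensitive; parity enters only when the single quadratic in $k$ is rewritten with $k=\floor{\frac{n}{2}}$. Your invariant is also false as stated: the tail accumulates $\ceil{\frac{n}{2}}+1$ and then, one per firing iteration, $n-2,n-3,\ldots$, while the two largest elements $n$ and $n-1$ stay near the front until the last effective iteration (for $n=12$, before iteration $i=3$ the permutation is $(11,8,12,6,5,4,3,2,1,7,9,10)$), so ``the $i$ largest elements occupy their destined tail positions'' cannot be the invariant that certifies the firing pattern.

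A second concrete issue is your summation range $i\in\{0,\ldots,k-1\}$. The closed forms in the lemma correspond to the trace ending after iteration $i=k-3$, at which point the permutation is $(n-1,n,\ceil{\frac{n}{2}},\ldots,2,1,\ceil{\frac{n}{2}}+1,\ldots,n-2)$ and D4's $L^2$ produces $K_{\floor{\frac{n}{2}},n}^{'}$ (your ``rotation by two'' observation is correct here). But at $i=k-2$ and $i=k-1$ the condition $\pi[1]=n-1$ still holds, so under the literal loop bound these iterations are not no-ops: they both add moves and destroy the target permutation, and summing over $0,\ldots,k-1$ will not reproduce $\frac{3n^2-34n+112}{8}$ (even $n$) or $\frac{3n^2-40n+149}{8}$ (odd $n$). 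This mismatch between the pseudocode's loop bound and the analysis is inherited from the paper (whose own sum stops at $k-3$), but your write-up must resolve it explicitly---either restrict the effective range to $i\leq k-3$ or prove the later iterations contribute nothing, and the latter is false as the condition is written.
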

\begin{proof}
From Lemma \ref{l3}, the permutation obtained after steps $D1$ and $D2$ is \\ $(n-1,n-2,\ldots,\ceil{\frac{n}{2}}+2, n, \ceil{\frac{n}{2}}, \ceil{\frac{n}{2}}-1, \ldots 3,2,1,\ceil{\frac{n}{2}}+1)$.\\
When $i=0$ in step $D3$ only $E$ move is executed and permutation thus obtained is \\
$(n-2,n-1,\ldots,\ceil{\frac{n}{2}}+2, n, \ceil{\frac{n}{2}}, \ceil{\frac{n}{2}}-1, \ldots 3,2,1,\ceil{\frac{n}{2}}+1)$.\\
When $i=1$ in step $D3$ only $L$ move is executed and permutation thus obtained is \\
$(n-1,\ldots,\ceil{\frac{n}{2}}+2, n, \ceil{\frac{n}{2}}, \ceil{\frac{n}{2}}-1, \ldots 3,2,1,\ceil{\frac{n}{2}}+1,n-2)$.\\
There after in each iteration in step $D3$, $E$ move, $(RE)^{i-1}$ and $L^{i}$ are executed so that the elements between $\pi[1]$ and $\pi[n-i+2]$ are left rotated. Thus, the permutation obtained after step $D3$ is $(n-1, n, \ceil{\frac{n}{2}}, \ceil{\frac{n}{2}}-1.\dots, 2,1,\ceil{\frac{n}{2}}+1,\ldots,n-2)$ and the number of moves executed in each iteration is $1+2(i-1)+i=3i-1$.\\
Therefore, the total number of moves executed in step $D3$ is \\ $1+1+\sum_{j=2}^{\floor{\frac{n}{2}}-3}(3j-1)\\=2+\sum_{i=2}^{\floor{\frac{n}{2}}-3}(3i-1)\\=$
$\begin{cases}
\frac{3n^2-34n+96}{8} & \text{if $n$ is even}\\
\frac{3n^2-40n+133}{8} & \text{if $n$ is odd}\\
\end{cases}$\\
Execution of $L^2$ in step $D4$ yields \\
$(\ceil{\frac{n}{2}}, \ceil{\frac{n}{2}}-1.\dots, 2,1,\ceil{\frac{n}{2}}+1,\ldots,n-2,n-1, n)$ which is $K_{\floor{\frac{n}{2}},n}^{'}$. Therefore, the total number of moves executed in steps $D3$ and $D4$ are $\begin{cases}
\frac{3n^2-34n+112}{8} & \text{if $n$ is even}\\
\frac{3n^2-40n+149}{8} & \text{if $n$ is odd}\\
\end{cases}$.\\
\end{proof}

\begin{lemma}
\label{l5}
The permutation obtained after executing D5 and D6 of LRE1 algorithm is \\ $(1,\ceil{\frac{n}{2}}-1,\ceil{\frac{n}{2}}-2,\ldots,2,\ceil{\frac{n}{2}},\ceil{\frac{n}{2}}+1,\ceil{\frac{n}{2}}+2,\ldots,n-1,n)$ and the number of moves executed in the above two steps is $2n-7$ when n is even and $2n-5$ when n is odd $\forall n\geq 5$. 
\end{lemma}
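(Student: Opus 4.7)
The plan is to use Lemma~\ref{l4} as the starting point: at the entry to D5 the permutation is $K_{\floor{n/2},n}' = (k', k'-1, \ldots, 2, 1, k'+1, k'+2, \ldots, n)$, where $k' = \ceil{n/2}$. First I would apply the single $E$ move of D5, obtaining $(k'-1, k', k'-2, k'-3, \ldots, 2, 1, k'+1, \ldots, n)$. The block $(k'+1, \ldots, n)$ is already in its intended final positions, so the job of D6 is to rearrange the left portion while that block is pushed aside and then returned.

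I would then analyse the two halves of D6 by induction on the iteration count. Each $LE$ rotates the leading element to the tail and then swaps the new first two entries; by induction on $j$ I would show that after $j$ applications of $LE$ the permutation reads $(k'-1-j, k', k'-2-j, \ldots, 2, 1, k'+1, k'+2, \ldots, n, k'-1, k'-2, \ldots, k'-j)$. In words, $k'$ remains pinned in position 2, one small entry migrates from the front region to the right end per step, and the block $(k'+1, \ldots, n)$ drifts one position to the left. Setting $j = k'-2$ produces $(1, k', k'+1, \ldots, n, k'-1, k'-2, \ldots, 2)$, so the entry $1$ now leads the array and the mid-range values $k'-1, \ldots, 2$ have been stacked on the right in reverse order.

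Next, for $(RE)^{k'-2}$, each iteration rotates the final element to the front and exchanges it past the leading $1$. A parallel induction on $j$ would show that after $j$ applications the permutation has the form $(1, j+1, j, \ldots, 3, 2, k', k'+1, \ldots, n, k'-1, k'-2, \ldots, j+2)$; the right-hand tail shrinks by one per step and the extracted entry is inserted immediately after $1$. At $j = k'-2$ the tail is empty, yielding $(1, k'-1, k'-2, \ldots, 2, k', k'+1, \ldots, n)$, which matches the lemma's claim.

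The move count is then routine: D5 contributes $1$ move and D6 contributes $2(k'-2) + 2(k'-2) = 4(k'-2)$ moves, for a total of $4k' - 7$. Substituting $k' = n/2$ for even $n$ gives $2n-7$, and $k' = (n+1)/2$ for odd $n$ gives $2n-5$. The main technical obstacle is stating the two invariants cleanly and checking their boundary cases ($j = 0, 1$ and $j = k'-2$); an off-by-one in tracking where the block $(k'+1, \ldots, n)$ currently sits, or a mistimed $E$ in either phase, would shift the pattern by one position and wreck the base step of the induction.
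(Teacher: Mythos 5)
Your proposal is correct and follows essentially the same route as the paper: start from $K'_{\lfloor n/2\rfloor,n}$ (Lemma~\ref{l4}), apply the $E$ of D5, trace $(LE)^{k'-2}$ and $(RE)^{k'-2}$ to reach $(1,\ceil{\frac{n}{2}}-1,\ldots,2,\ceil{\frac{n}{2}},\ldots,n)$, and count $1+4(k'-2)=4k'-7$ moves. The only difference is that you verify the two intermediate permutations by explicit inductive invariants, whereas the paper simply states them; your invariants are correct (including the boundary cases $j=0,1$ and $j=k'-2$), so this is just added rigor, not a different argument.
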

\begin{proof}
According to Lemma \ref{l4}, the permutation obtained after the steps $D1$ to $D4$ is \\
$(\ceil{\frac{n}{2}},\ceil{\frac{n}{2}}-1,\ceil{\frac{n}{2}}-2,\ldots,2,1,\ceil{\frac{n}{2}}+1,\ceil{\frac{n}{2}}+2,\ldots,n-1,n)$.\\
Now, executing $E$ move in step $D5$ yields \\ $(\ceil{\frac{n}{2}}-1,\ceil{\frac{n}{2}},\ceil{\frac{n}{2}}-2,\ldots,2,1,\ceil{\frac{n}{2}}+1,\ceil{\frac{n}{2}}+2,\ldots,n-1,n)$.\\
Then executing $(LE)^{k'-2}$ in step $D6$ yields\\ $(1,\ceil{\frac{n}{2}},\ceil{\frac{n}{2}}+1,\ceil{\frac{n}{2}}+2,\ldots,n-1,n,\ceil{\frac{n}{2}}-1,\ceil{\frac{n}{2}}-2,\ldots,2)$.\\
Then executing $(RE)^{k'-2}$ in step $D6$ yields\\ $(1,\ceil{\frac{n}{2}}-1,\ceil{\frac{n}{2}}-2,\ldots,2,\ceil{\frac{n}{2}},\ceil{\frac{n}{2}}+1,\ceil{\frac{n}{2}}+2,\ldots,n-1,n)$.\\
Therefore, the total number of moves executed in steps $D5$ and $D6$ is \\ $1+4(k'-2)=4k'-7=\begin{cases}
2n-7 & \text{if $n$ is even}\\
2n-5 & \text{if $n$ is odd}
\end{cases}$.\\
\end{proof}
\begin{lemma}
\label{l6}
The permutation obtained after executing D7 and D8 of LRE1 algorithm is $(2,3,\ldots,n-1,n,1)$ and the number of moves executed in the above two steps is $\frac{3n^2-38n+128}{8}$ when n is even and $\frac{3n^2-32n+93}{8}$ when n is odd $\forall n\geq 7$.
\end{lemma}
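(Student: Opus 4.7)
The approach parallels the proof of Lemma~\ref{l4} (which handled the analogous D3--D4 block), applied to the mirror half of the permutation produced by D5 and D6, with one extra twist from the conditional suppression of $L^i$ in D8 at iteration $i = k'-3$. First I would invoke Lemma~\ref{l5} to obtain the pre-D7 permutation, and apply the single $L$ move of D7 to shift the leading $1$ to the tail, producing $\pi = (k'-1, k'-2, \ldots, 2, k', k'+1, \ldots, n, 1)$, where $k' = \ceil{n/2}$. The crucial observation is that $\pi[1] = k'-1$, which primes the IF in D8 to fire at $i = 0$.

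Next I would trace the D8 loop iteration by iteration, establishing by induction on $i$ the following pattern (for $k' \geq 5$). At $i = 0$ the IF fires and only the $E$ move acts, swapping $k'-1$ and $k'-2$ at the head. At $i = 1$ the IF fails because $\pi[1] = k'-2$, and a single $L$ rotates the head to the tail, restoring $\pi[1] = k'-1$. For each $i$ with $2 \leq i \leq k'-4$, the composite $E, (RE)^{i-1}, L^i$ executes while $\pi[1] = k'-1$, effecting a subarray left-rotation that slides one more element of the descending block $(k'-1, k'-2, \ldots, 2)$ into its correct slot, while preserving the invariant $\pi[1] = k'-1$. At $i = k'-3$ the IF still fires and $E, (RE)^{k'-4}$ execute, but the guard $i \not= k'-3$ suppresses $L^i$; this deliberate truncation is exactly what lands the permutation on $(2, 3, \ldots, n, 1)$ rather than overshooting. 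The boundary case $k' = 4$ (namely $n \in \{7, 8\}$) has to be handled separately: here $k'-3 = 1$, the IF fails at $i = 1$, $L^1$ is suppressed, and only the $E$ from $i = 0$ runs, and one checks that this still matches both parity formulas at the boundary.

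For the move count I would add D7's one move to D8's $1 + 1 + \sum_{i=2}^{k'-4}(3i-1) + (2k'-7)$ when $k' \geq 5$, where each middle-range $i$ contributes the $3i - 1$ moves of $E, (RE)^{i-1}, L^i$, and $i = k'-3$ contributes $1 + 2(k'-4) = 2k' - 7$ moves because $L^i$ is omitted. Substituting $k' = n/2$ for even $n$ and $k' = (n+1)/2$ for odd $n$, then collapsing the arithmetic progression, yields $(3n^2 - 38n + 128)/8$ and $(3n^2 - 32n + 93)/8$ respectively.

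The main obstacle is the inductive step for the middle range $2 \leq i \leq k'-4$: one must show that the triple $E, (RE)^{i-1}, L^i$ really implements a left rotation of the intended subarray (keeping $1$ anchored at the tail, keeping $k', k'+1, \ldots, n$ in their final positions at the far right, and depositing the next element of the descending block into its target slot), and that the invariant $\pi[1] = k'-1$ survives every iteration so that the IF keeps firing. The terminal iteration $i = k'-3$ is the most delicate point, since it is precisely the absent $L^i$ that converts what would be one rotation too many into a clean landing on $(2, 3, \ldots, n, 1)$.
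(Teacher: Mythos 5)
Your proposal follows essentially the same route as the paper's proof: start from the Lemma~\ref{l5} permutation, apply the $L$ of D7, then trace D8 as $E$ only at $i=0$, $L$ only at $i=1$, the $3i-1$-move pattern $E,(RE)^{i-1},L^i$ for $2\leq i\leq k'-4$, and the truncated $E,(RE)^{k'-4}$ at $i=k'-3$, summing to the stated closed forms. Your added care about the invariant $\pi[1]=k'-1$ and the boundary case $k'=4$ (i.e.\ $n\in\{7,8\}$) is consistent with, and slightly more explicit than, the paper's argument, so the proposal is correct and matches the paper's approach.
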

\begin{proof}
According to Lemma \ref{l5}, the permutation obtained after steps $D1$ to $D6$ is \\
$(1,\ceil{\frac{n}{2}}-1,\ceil{\frac{n}{2}}-2,\ldots,2,\ceil{\frac{n}{2}},\ceil{\frac{n}{2}}+1,\ceil{\frac{n}{2}}+2,\ldots,n-1,n)$.\\
$L$ move is executed in step $D7$ and the permutation thus obtained is \\ $(\ceil{\frac{n}{2}}-1,\ceil{\frac{n}{2}}-2,\ldots,2,\ceil{\frac{n}{2}},\ceil{\frac{n}{2}}+1,\ceil{\frac{n}{2}}+2,\ldots,n-1,n,1)$.\\
When $i=0$ in step D8, only $E$ move is executed and the permutation thus obtained is \\ $(\ceil{\frac{n}{2}}-2,\ceil{\frac{n}{2}}-1,\ldots,2,\ceil{\frac{n}{2}},\ceil{\frac{n}{2}}+1,\ceil{\frac{n}{2}}+2,\ldots,n-1,n,1)$. \\
When $i=1$ in step D8, only $L$ move is executed and the permutation thus obtained is \\ $(\ceil{\frac{n}{2}}-1,\ldots,2,\ceil{\frac{n}{2}},\ceil{\frac{n}{2}}+1,\ceil{\frac{n}{2}}+2,\ldots,n-1,n,1,\ceil{\frac{n}{2}}-2)$.\\
There after in each iteration in step $D8$ except when $i=(k'-3)$, $E$ move, $(RE)^{i-1}$ and $L^i$ are executed so that the elements between $\pi[1]$ and $\pi[n-i+2]$ are left rotated. When $i=k'-3$ only $E$ move and  $(RE)^{i-1}$ are executed. Thus, the obtained permutation after step $D8$ is $(2,3,\ldots,n-1,n,1)$. The number of moves executed in step $D8$ is \\ $1+1+1+\sum_{j=2}^{\ceil{\frac{n}{2}-4}}(3j-1)+1+2*\ceil{\frac{n}{2}-4}\\=4+\sum_{j=2}^{\ceil{\frac{n}{2}-4}}(3j-1)+2*\ceil{\frac{n}{2}-4}\\= \begin{cases}
\frac{3n^2-38n+128}{8} & \text{if $n$ is even}\\
\frac{3n^2-32n+93}{8} & \text{if $n$ is odd}
\end{cases}$.\\
\end{proof}
\begin{lemma}
\label{l7}
Algorithm LRE1 is correct.
\end{lemma}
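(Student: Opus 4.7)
The plan is to chain together Lemmas~\ref{l3}--\ref{l6} to track the state of the working array $\pi$ through the algorithm's labelled blocks D1--D8, then verify that the single R move in D9 finishes the sort. The main obstacle is not the ``generic'' case but the case analysis at small $n$, where several guarded conditionals in Algorithm LRE1 (namely $k \not= 1$, $k \geq 3$, $k \geq 4$, $k' \geq 3$, and $k' \geq 4$) skip whole blocks and force direct inspection rather than appeal to the earlier lemmas.

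For the generic regime $n \geq 8$ (so that $k = \lfloor n/2 \rfloor \geq 4$ and $k' = \lceil n/2 \rceil \geq 4$, and all of Lemmas~\ref{l3}--\ref{l6} apply in full), the argument is essentially a one-liner. Lemma~\ref{l6} yields the working permutation after D8 as $(2,3,\ldots,n-1,n,1)$; the R move in D9 then cyclically shifts this one position to the right, producing $(1,2,3,\ldots,n-1,n) = I_n$. Hence Algorithm LRE1 correctly sorts $R_n$ for every $n \geq 8$.

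For the residual cases $n \leq 7$ I would partition the analysis by $(k,k')$: $n=3$ with $(1,2)$, $n=4$ with $(2,2)$, $n=5$ with $(2,3)$, $n=6$ with $(3,3)$, and $n=7$ with $(3,4)$. In each case the executed blocks form a proper, short subset of D1--D9, so a direct hand-trace from $R_n$ to $I_n$ suffices and the result follows by inspection. The trickiest of these are $n=6$ and $n=7$, where D2 runs in full but the internal loop of D3 is skipped (because $k=3<4$); here I would check that the permutation emerging from D2 together with D4 already coincides with $K'_{\lfloor n/2 \rfloor, n}$, thereby entering the hypothesis of Lemma~\ref{l5}, and then trace the remaining blocks to completion. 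Any failure of that identification would indicate a defect in the algorithm rather than a gap in the proof, so these small cases double as a sanity check on the construction itself.
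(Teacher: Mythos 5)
Your proposal matches the paper's own proof of this lemma: the paper simply invokes Lemma~\ref{l6} to get the permutation $(2,3,\ldots,n-1,n,1)$ after D1--D8 and observes that the single $R$ move in D9 produces $I_n$. Your additional case analysis for $n\leq 7$ (where some blocks are skipped) is handled by the paper not in this lemma but in the subsequent theorem's cases $n=3,\ldots,7$, so it is the same overall argument, just organized slightly differently.
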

\begin{proof}
According to Lemma \ref{l6}, the permutation obtained after steps $D1$ to $D8$ is \\ $(2,3,\ldots, \ceil{\frac{n}{2}}-1,\ceil{\frac{n}{2}},\ceil{\frac{n}{2}}+1,\ceil{\frac{n}{2}}+2,\ldots,n-1,n,1)$.\\
Executing $R$ move in step $D9$ yields\\
$(1,2,3,\ldots, \ceil{\frac{n}{2}}-1,\ceil{\frac{n}{2}},\ceil{\frac{n}{2}}+1,\ceil{\frac{n}{2}}+2,\ldots,n-1,n)$ which is $I_n$. Hence proves the lemma.
\end{proof}
\begin{theorem}
The number of moves required to sort $R_n$ with LRE1 algorithm is \\
J'(n) = $\begin{cases}
2 & \text{if $n=3$}\\
4 & \text{if $n=4$}\\
8& \text{if $n=5$}\\
13& \text{if $n=6$}\\
20& \text{if $n=7$}\\
\frac{3n^2-20n+72}{4}& \text{if $n\geq8$ and $n$ is even}\\
\frac{3n^2-20n+73}{4}& \text{if $n\geq8$ and $n$ is odd}
 \end{cases}$.
\end{theorem}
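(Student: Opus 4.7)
The plan is to handle two regimes separately. For $n \geq 8$ every outer conditional in Algorithm LRE1 is satisfied, so I would simply sum the four per-block move counts supplied by Lemmas \ref{l3}--\ref{l6} and add the single $R$ of $D9$. For the small cases $n \in \{3,4,5,6,7\}$ at least one of the guards ($k \neq 1$, $k \geq 3$, $k \geq 4$, $k' \geq 3$, $k' \geq 4$) fails and some of these lemmas no longer apply, so there I would simulate Algorithm LRE1 directly on $R_n$ and tabulate the emitted moves.

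For $n \geq 8$ we have $k = \lfloor n/2 \rfloor \geq 4$ and $k' = \lceil n/2 \rceil \geq 4$, so Lemmas \ref{l3}, \ref{l4}, \ref{l5}, \ref{l6} apply verbatim to the blocks $D1$--$D2$, $D3$--$D4$, $D5$--$D6$, $D7$--$D8$ in turn. Adding these four counts together with the single $R$ move of $D9$ yields
\begin{equation*}
J'(n) = (2n-6) + \tfrac{3n^2-34n+112}{8} + (2n-7) + \tfrac{3n^2-38n+128}{8} + 1
\end{equation*}
when $n$ is even, and the analogous sum $(2n-8) + \tfrac{3n^2-40n+149}{8} + (2n-5) + \tfrac{3n^2-32n+93}{8} + 1$ when $n$ is odd. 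Clearing denominators and collecting terms simplifies these to $\tfrac{3n^2-20n+72}{4}$ and $\tfrac{3n^2-20n+73}{4}$ respectively; the linear constants combine to $-12$ in each parity (namely $-6-7+1$ versus $-8-5+1$), and the two quadratic fractions share the common denominator $8$ which halves to the final denominator $4$.

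For each base case I would simulate Algorithm LRE1 by hand. When $k=1$ (i.e.\ $n=3$) only the else-branch of $D4$ and the $E$ of $D5$ fire, giving $2$ moves. When $k=2$ (i.e.\ $n=4,5$) block $D2$--$D3$ is also skipped and only $D1$, $D4$, $D5$ (plus $D6$ when $k' \geq 3$) fire; counting gives $1+2+1=4$ for $n=4$ and $4+4(k'-2)=4+4=8$ for $n=5$. When $k=3$ (i.e.\ $n=6,7$) block $D2$ becomes active but the inner loop $D3$ is still skipped, contributing $4(k-2)+1=5$ further moves; this gives $5+1+2+1+4 = 13$ for $n=6$ (noting $k'=3$ so $D7$--$D9$ are still skipped), and for $n=7$ ($k'=4$) the blocks $D7$--$D9$ additionally activate, contributing a further $7$ moves and totalling $20$.

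The main obstacle is bookkeeping rather than conceptual: in the generic case one must track the parity splits in Lemmas \ref{l3}--\ref{l6} and simplify cleanly to denominator $4$, while in each small case one must simulate the pseudocode while honouring every guard (including the $\pi[1]=n-1$ / $\pi[1]=k'-1$ tests inside $D3$ and $D8$ and the $i \neq k'-3$ exception in $D8$), since the stage lemmas cannot be invoked outside their stated ranges $n \geq 5, 6, 7, 8$ respectively.
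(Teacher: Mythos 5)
Your proposal follows essentially the same route as the paper: for $n\ge 8$ it sums the per-block counts of Lemmas \ref{l3}--\ref{l6} plus the single $R$ of D9 (and your simplifications to $\frac{3n^2-20n+72}{4}$ and $\frac{3n^2-20n+73}{4}$ are correct), while for $3\le n\le 7$ it performs the same case-by-case accounting as the paper, which invokes Lemmas \ref{l3}, \ref{l5}, \ref{l6} within their stated ranges instead of re-simulating, with matching totals $2,4,8,13,20$. One bookkeeping slip worth fixing: for $n=7$ the additional $7$ moves beyond the $n=6$ total are not all attributable to D7--D9 --- since $k'=4$, block D6 costs $4(k'-2)=8$ moves rather than $4$, and D7--D9 contribute only $1+1+1=3$ moves (Lemma \ref{l6} gives $2$ for D7--D8 at $n=7$, plus the $R$ of D9) --- although your stated total of $20$ agrees with the paper.
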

\begin{proof}
Case-i: $n$ is 3\\ \\
When $n$=3, the values of $k$ and $k'$ are 1 and 2 respectively. So, only steps $D1$ and $D4$ are executed. Therefore, the number of moves executed are $1+1=2$.\\\\
Case-ii: $n$ is 4\\ \\
When $n$=4, the values of both $k$ and $k'$ is 2, So only steps $D1$, $D4$ and $D5$ are executed. Therefore, the total number of moves executed are $1+2+1=4$.\\\\
Case-iii: $n$ is 5\\ \\
When $n$=5, the values of $k$ and $k'$ are 2 and 3 respectively. So only steps $D1$, $D4$, $D5$ and $D6$ are executed. According to Lemma \ref{l5}, the number of moves executed by steps $D5$ and $D6$ is $2n-5$ when $n$ is odd. Therefore, the total number of moves executed are $1+2+2n-5=2n-2=8$.\\\\
Case-iv: $n$ is 6\\ \\
When $n$=6, the values of both $k$ and $k'$ is 3. Therefore steps $D1$, $D2$, $D4$, $D5$ and $D6$ are executed. According to Lemma \ref{l3}, the number of moves executed by steps $D1$ and $D2$ is $2n-6$ when $n$ is even. According to Lemma \ref{l5}, the number of moves executed by steps $D5$ and $D6$ is $2n-7$ when $n$ is even. Therefore, the total number of moves executed are $2n-6+2+2n-7=4n-11=13$.\\\\
Case-v: $n$ is 7\\ \\
When $n$=7, the values of $k$ and $k'$ are 3 and 4 respectively. Therefore steps $D1$, $D2$, $D4$, $D5$ and $D6$ are executed. According to Lemma \ref{l3}, the number of moves executed by steps $D1$ and $D2$ is $2n-8$ when $n$ is odd. The number of moves executed by step $D4$ is 2. According to Lemma \ref{l5}, the number of moves executed by steps $D5$ and $D6$ is $2n-5$ when $n$ is odd. According to Lemma \ref{l6}, the number of moves executed by steps $D7$ and $D8$ is $\frac{3n^2-32n+93}{8}$ when $n$ is odd. Number of moves executed by step $D9$ is 1. Therefore, the total number of moves executed is $2n-8+2+2n-5+\frac{3n^2-32n+93}{8}+1=4n-11+\frac{3n^2-32n+93}{8}+1=20$.\\\\
Case-vi: $n\geq8$ and $n$ is even\\ \\
In this case all steps from $D1$ to $D9$ are executed. According to Lemma \ref{l3}, the number of moves executed by steps $D1$ and $D2$ is $2n-6$ when $n$ is even. According to Lemma \ref{l4}, the number of moves executed by steps $D3$ and $D4$ is $\frac{3n^2-34n+112}{8}$ when $n$ is even.  According to Lemma \ref{l5}, the number of moves executed by steps $D5$ and $D6$ is $2n-7$ when $n$ is even. According to Lemma \ref{l6}, the number of moves executed by steps $D7$ and $D8$ is $\frac{3n^2-38n+128}{8}$ when $n$ is even. Number of moves executed by step $D9$ is 1. Therefore, the total number of moves executed by Algorithm $LRE1$ is \\
\footnotesize 
\begin{equation*}
\begin{split}
J'(n) &=2n-6+\frac{3n^2-34n+112}{8}+2n-7+\frac{3n^2-38n+128}{8}+1\\
&= \frac{3n^2-20n+72}{4}
\end{split}
\end{equation*}\\
\normalsize
Case-vii: $n\geq8$ and $n$ is odd\\ \\
In this case all steps from $D1$ to $D9$ are executed. According to Lemma \ref{l3}, the number of moves executed by steps $D1$ and $D2$ is $2n-8$ when $n$ is odd. According to Lemma \ref{l4}, the number of moves executed by steps $D3$ and $D4$ is $\frac{3n^2-40n+149}{8}$ when $n$ is odd.  According to Lemma \ref{l5}, the number of moves executed by steps $D5$ and $D6$ is $2n-5$ when $n$ is odd. According to Lemma \ref{l6}, the number of moves executed by steps $D7$ and $D8$ is $\frac{3n^2-32n+93}{8}$ when $n$ is odd. Number of moves executed by step $D9$ is 1. Therefore, the total number of moves executed by Algorithm $LRE1$ is \\
\footnotesize
\begin{equation*}
\begin{split}
J'(n) &=2n-8+\frac{3n^2-40n+149}{8}+2n-5+\frac{3n^2-32n+93}{8}+1\\
&= \frac{3n^2-20n+73}{4}
\end{split}
\end{equation*}\\
\normalsize
Therefore, ignoring the lower order terms the new tighter upper bound for number of moves required to sort $R_n$ with $LRE$ is $\frac{3n^2}{4}$.
\end{proof}
\section{Exhaustive Search Results}
A branch and bound algorithm that employs BFS, i.e. \emph{Algorithm Search}, has been designed for computing the minimum number of moves to sort $R_n$ for a given $n$. It yielded values of 43 for $n=10$ and 53 for $n=11$. Thus, including the current values, the identified minimum number of moves for for $n=1 \ldots 11$ are respectively $(0, 1, 2, 4, 8, 13, 19, 26, 34, 43, 53)$. 
A list of permutations whose distance has been computed is maintained and the execution in every branch terminates either upon reaching $I_n$ or exceeding a bound. E, L and R generators are applied to each of the intermediate permutations yielding the corresponding permutations. 
We avoid application of two successive generators that are inverses of each other as such a sequence cannot be a part of optimum solution. Notation: \textit{Node} contains a permutation $\in S_n$ and its distance from $R_n$ corresponds to the minimum number of moves. With this algorithm and better computational resources one will be able to compute the corresponding values for larger values of $n$. \\\\
\noindent \textbf{Algorithm Search}\\
Initialization: The source vertex $\delta$ contains the permutation $R_n$ and its path is initialized to null. It is enqueued into BFS queue $Q$.\\
\noindent Input:$R_n$. Output: Optimum number of moves to reach $I_n$ with $LRE$.
\begin{algorithm}
\begin{algorithmic}[1]
\caption{Algorithm Search}
\WHILE{(Q is not empty)}
\STATE Dequeue $u$ from $Q$
\IF{($u$ is visited)}
\STATE continue
\ENDIF
\STATE Mark $u$ as visited
\IF{($u$ is $I_n$)}
\STATE \COMMENT{Array is sorted}
\RETURN length of $u.path$
\STATE break
\ENDIF

\IF{(Last move on $u.path$ $\not= E$ or $u.path =$ null  )}
\STATE Execute E on $u \rightarrow v$
\IF { $v$ is not visited}
\STATE  $v.path \gets u.path$ followed by E
\STATE Enqueue $v$ to Q
\ENDIF
\ENDIF
\IF{(Last move on $u.path$ $\not= L$ or $u.path =$ null  )}
\STATE Execute R on $u\rightarrow v$
\IF { $v$ is not visited}
\STATE $v.path \gets u.path$ followed by R
\STATE Enqueue $v$ to Q
\ENDIF
\ENDIF
\IF{(Last move on $u.path$ $\not= R$ or $u.path =$ null  )}
\STATE Execute L on $u\rightarrow v$
\IF { $v$ is not visited}
\STATE $v.path \gets u.path$ followed by L
\STATE Enqueue $v$ to Q
\ENDIF
\ENDIF
\ENDWHILE
\end{algorithmic}
\end{algorithm}
\section{Results}
Comparison of the number of moves required to sort $R_n$ with $LRE$ by various algorithms. The first column shows $n$, the size of permutation. Subsequent columns show the number of moves required to sort $R_n$ with Algorithms $LRE$, $LRE1$ and $Search$ respectively.
\begin{table}[htbp]
\begin{center}
\begin{tabular}{|c|c|c|c|}
\hline
\textbf{\textit{n}} & \textbf{\textit{LRE}}& \textbf{\textit{LRE1}}& \textbf{\textit{Search(Optimal)}} \\
\hline
3&3&2&2\\
4&4&4&4\\
5&8&8&8\\
6&15&13&13\\
7&25&20&19\\
8&38&26&26\\
9&54&34&34\\
10&73&43&43\\
11&95&54&53\\
\hline
\end{tabular}
\label{tab1}
\end{center}
\end{table}
\section*{Conclusion}
The first known upper bound for sorting $R_n$ with $LRE$ has been shown. A tighter upper bound has been derived. The future work consists of identifying the exact upper bound for sorting $R_n$ with $LRE$. 
The identification of the diameter of the $LRE$ Cayley graph and the characterization of permutations that are farthest from $I_n$ in this Cayley graph are open questions.
\section*{Acknowledgement} Venkata Vyshnavi Ravella and CK Phani Datta have been contributing towards implementation, testing and analysis of the algorithms. Authors thank Jayakumar P for helpful suggestions. Authors thank Tony Bartoletti for personal communication.

\end{document}